\documentclass{article}
\linespread{1.3}

\usepackage{arxiv}
\usepackage[utf8]{inputenc}
\usepackage[T1]{fontenc}
\usepackage{hyperref}
\usepackage{url}
\usepackage{booktabs}
\usepackage{amsfonts}
\usepackage{nicefrac}
\usepackage{microtype}
\usepackage{lipsum}
\usepackage{amsmath}
\usepackage{algpseudocode}
\DeclareMathOperator*{\argmin}{arg\,min}
\usepackage{amsthm}
\usepackage{graphicx}
\usepackage{algorithm}
\newtheorem{theorem}{Theorem}
\graphicspath{ {./images/} }
\newtheorem{remark}{Remark}
\usepackage{caption}
\usepackage{authblk}
\captionsetup[table]{skip=10pt}

\title{Tensor Star Tensor Decomposition and Its Applications to Higher-order Compression and Completion}

\author[1]{\textbf{Wuyang Zhou}}
\author[2]{\textbf{Yu-Bang Zheng}}
\author[3]{\textbf{Qibin Zhao}}
\author[1]{\textbf{Danilo Mandic}}
\affil[1]{Department of Electrical and Electronic Engineering, Imperial College London}
\affil[2]{School of Information Science and Technology, Southwest Jiaotong University}
\affil[3]{Tensor Learning Team, RIKEN Center for Advanced Intelligence Project}

\begin{document}

\textbf{This work has been submitted to the IEEE for possible publication. Copyright may be transferred without notice, after which this version may no longer be accessible.}
\maketitle
\begin{abstract}
A novel tensor decomposition framework, termed Tensor Star (TS) decomposition, is proposed which represents a new type of tensor network decomposition based on tensor contractions. This is achieved by connecting the core tensors in a ring shape, whereby the core tensors act as skip connections between the factor tensors and allow for direct correlation characterisation between any two arbitrary dimensions. Uniquely, this makes it possible to decompose an order-$N$ tensor into $N$ order-$3$ factor tensors $\{\mathcal{G}_{k}\}_{k=1}^{N}$ and $N$ order-$4$ core tensors $\{\mathcal{C}_{k}\}_{k=1}^{N}$, which are arranged in a star shape. Unlike the class of Tensor Train (TT) decompositions, these factor tensors are not directly connected to one another. The so obtained core tensors also enable consecutive factor tensors to have different latent ranks. In this way, the TS decomposition alleviates the "curse of dimensionality" and controls the "curse of ranks", exhibiting a storage complexity which scales linearly with the number of dimensions and as the fourth power of the ranks.
\end{abstract}

\section{Introduction}
Classic higher-order tensor decompositon frameworks include CANDECOMP/PARAFAC (CP) decomposition \cite{cpd} and Tucker decompositon (TKD) \cite{tkd}, which is a higher-order generalisation of the matrix singular value decompositon (SVD). Higher-Order Singular Value Decomposition (HOSVD) \cite{hosvd} is a specific case of Tucker decomposition, which yields an all-orthogonal core tensor and orthonormal factor matrices. \textbf{Despite the increased interpretability, TKD and HOSVD need to store a core tensor of the same order as the original tensor, causing a high storage complexity}. To address this issue, Tensor Network (TN) based decompositions such as Tensor Train (TT) decomposition \cite{ttd} have become popular in recent years due to its super-compression and efficient computational properties. To solve the problem of rank accumulation and sequential multiplication in TT decompositons, the Tensor Ring (TR) decomposition \cite{zhao2016tensor} was proposed, which replaces the two order-$2$ tensors in the TT decomposition by order-$3$ core tensors, and connects both ends. \textbf{However, the lack of a core structure that directly connects any two factor tensors in TT and TR decomposition limits their interpretability and only allows for the correlation characterisation between two consecutive dimensions.} A step-forward was introduced by adding an order-$N$ core tensor to TR decompositon, so called the Tensor Wheel (TW) decomposition  \cite{twd}. Other recent efforts to find a more efficient tensor network representation include the Fully-Connected Tensor Network (FCTN) Decomposition \cite{zheng2021fully}, which decomposes an order-$N$ tensor into $N$ small-sized order-$N$ fully-connected tensors. \textbf{While TW and FCTN decompositions exhibit stronger correlation characterization abilities compared to TT and TR decompositions, they suffer from significantly higher storage complexity, that is, $\mathcal{O}(NIR^3+R^N)$ and $\mathcal{O}(NIR^{N-1})$, respectively.}

\renewcommand{\arraystretch}{1.5}
\begin{table}[t]
\centering
\begin{tabular}{|c|c|}
\hline
\textbf{Model} & \textbf{Storage Complexity} \\
\hline
\hline
Original tensor & $\mathcal{O}(I^N)$ \\\hline
CP decomposition & $\mathcal{O}(NIR)$ \\\hline
Tucker decomposition & $\mathcal{O}(R^N + NIR)$ \\\hline
TT decomposition & $\mathcal{O}(NIR^2)$ \\\hline
TR decomposition & $\mathcal{O}(NIR^2)$ \\\hline
FCTN decomposition & $\mathcal{O}(NIR^{N-1})$ \\\hline
TW decomposition & $\mathcal{O}(NIR^3+R^N)$ \\\hline
Tensor Star decomposition & $\mathcal{O}(NIR^2+NR^4)$ \\\hline

\end{tabular}
\caption{Storage complexities of some established tensor decompositions. Observe that the proposed Tensor Star decomposition is particularly suitable for tensors of high dimension and cores of low rank.}
\label{tab:mathnotation}
\end{table}
\renewcommand{\arraystretch}{1}

\begin{figure}[t]
    \centering
    \includegraphics[width = 1\hsize]{./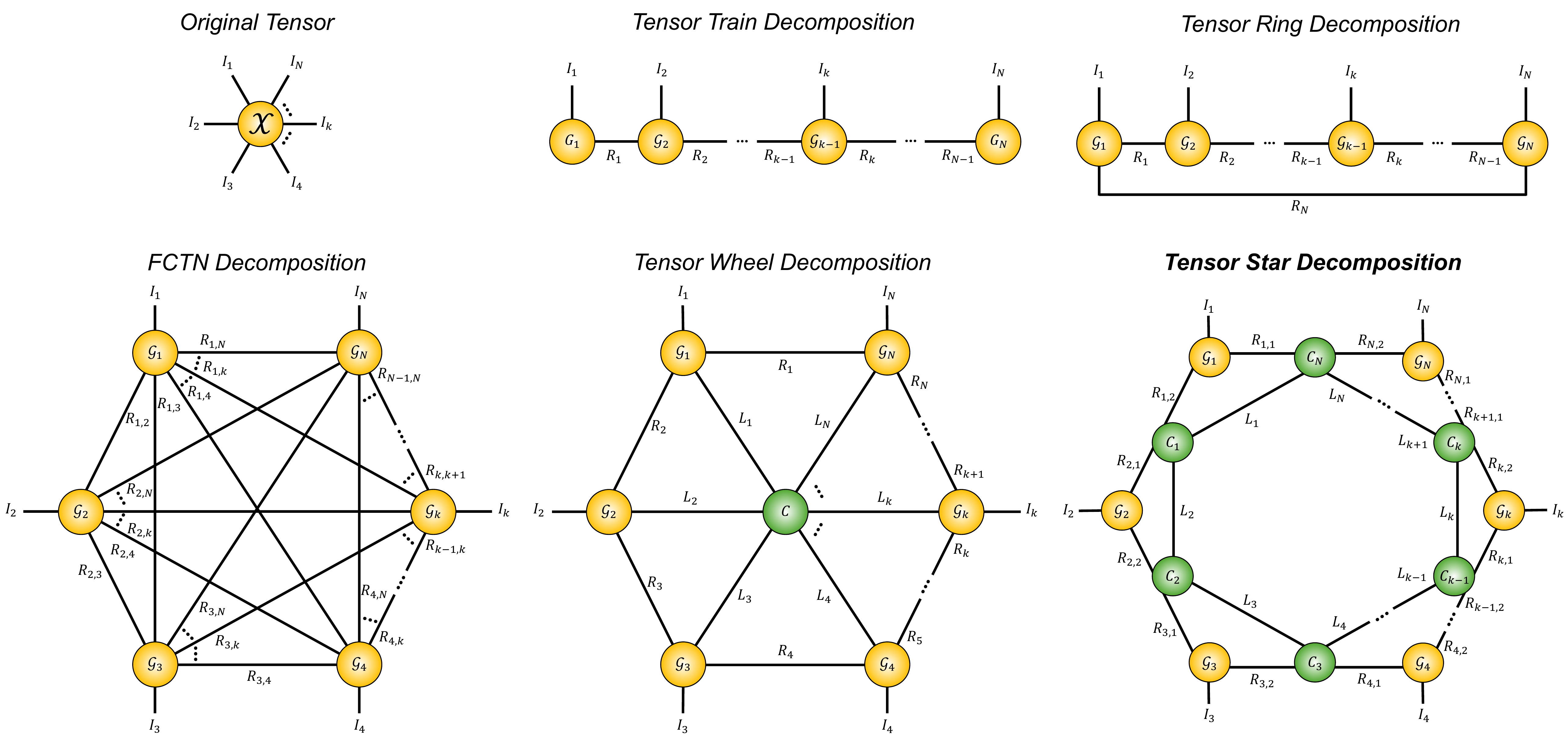}
    \caption{Graphs for some tensor decompositions of an order-$N$ tensor, $\mathcal{X}  \in \mathbb{R}^{I_1 \times I_2 \times \cdots \times I_N}$. The proposed Tensor Star decomposition of order-$N$ is shown at the bottom right.}
    \label{fig:tsd}
\end{figure}

\begin{figure}[t]
    \centering
    \includegraphics[width = 1\hsize]{./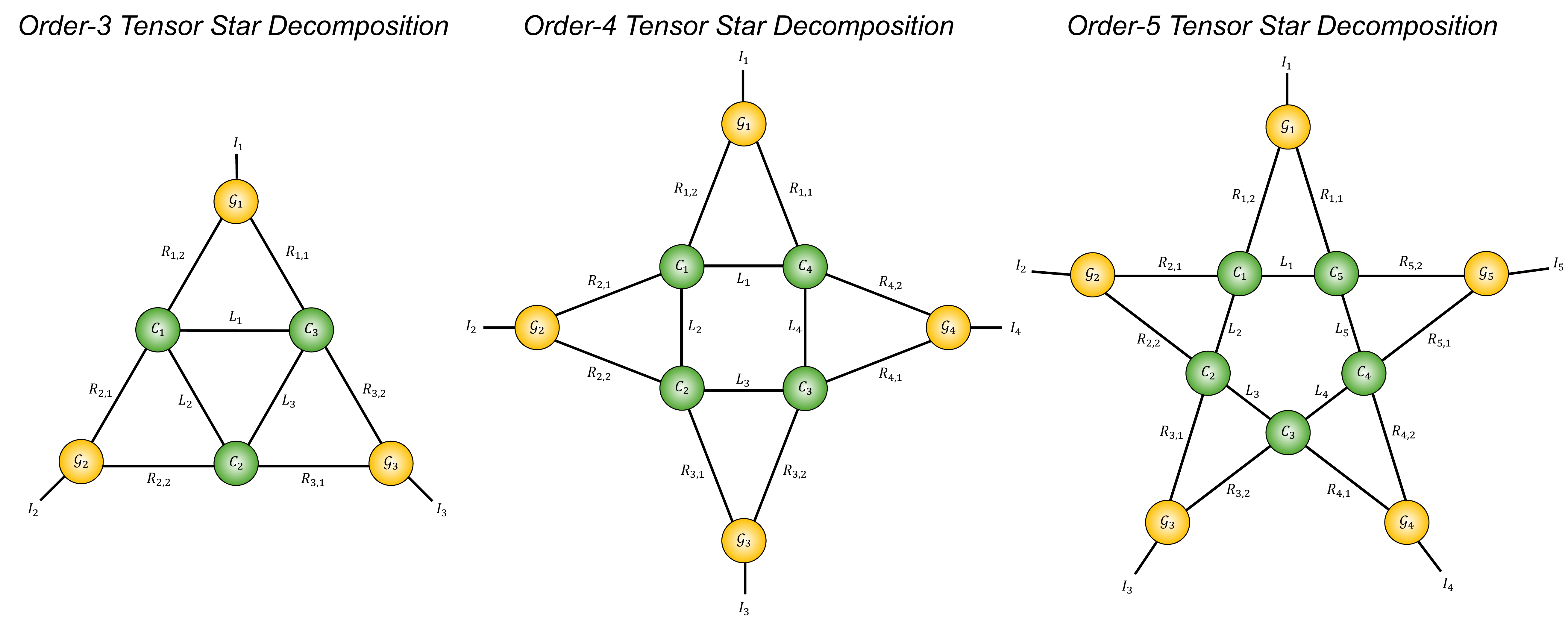}
    \caption{Tensor Star decomposition of order-$3$, order-$4$ and order-$5$}
    \label{fig:tsdn234}
\end{figure}

In this paper, we propose a novel tensor network decomposition, named Tensor Star (TS) Decomposition, which integrates strong correlation characterization abilities with a low storage complexity. To be specific, TS decomposition decomposes an order-$N$ tensor, $\mathcal{X} \in \mathbb{R}^{I_1 \times I_2 \times \cdots \times I_N}$, into $N$ order-$3$ factor tensors $\{\mathcal{G}_{k}\}_{k=1}^{N} \in \mathbb{R}^{R_{k,1}\times I_k \times R_{k,2}}$ and $N$ order-$4$ core tensors  $\{\mathcal{C}_{k}\}_{k=1}^{N} \in \mathbb{R}^{R_{k,2} \times L_k \times L_{k+1}\times R_{k+1,1}}$, which are arranged in a star shape. As shown in Figure \ref{fig:tsd} and Figure \ref{fig:tsdn234}, the core tensors $\{\mathcal{C}_{k}\}_{k=1}^{N}$ isolate the factor tensors from each other and can act as skip connections.  By assuming all latent ranks, $\{R_{k,s}\}_{k=1, s=1}^{N, 2}$ and $\{L_k\}_{k=1}^{N}$, to be of value $R$, and all mode sizes, $\{I_k\}_{k=1}^N$ to be $I$, TS decomposition achieves a storage complexity of $\mathcal{O}(NIR^2+NR^4)$, which scales linearly with the number of dimensions and the fourth power of the ranks. Thus, TS decomposition has a better storage efficiency compared to TKD, HOSVD, FCTN, and TW decomposition. At the same time, TS decomposition allows for a direct correlation characterisation between any two dimensions, in a way similar to TKD, HOSVD, FCTN, and TW decomposition. In this paper, we follow the usual approach to developing new tensor decomposition methods by deriving the definition, properties, two different rank bounds, alternating least squares algorithm, and proximal alternating minimization algorithm for tensor completion of Tensor Star decomposition.

In summary, the advantage of Tensor Star decomposition is threefold:
\begin{enumerate}
    \item The first advantage of TS decomposition is the direct correlation characterisation between any two arbitrary modes while not having a storage complexity that scales exponentially with the number of dimensions.
    \item The second advantage of TS decomposition is in the application of neural network (NN) compression, where one neural network layer is usually tensorized into higher dimensions to have smaller sizes in each mode before applying TN decompositions. Although both TT and TR decompositions enable super-compression, due to the sequential nature of the decomposed tensors, the gradient must go through the sub-tensors which are directly connected to the modes one-by-one, causing the gradient to go very deep implicitly, despite the fact that there maybe only one layer of fully-connected neural network being compressed. This is less of a problem in TKD, HOSVD, FCTN or TW decomposition, but they all have very high storage complexities, making them unsuitable for NN compression. The core innovation of TS decomposition is essentially adding skip connections to the TR topology, providing a highway for the gradient to pass through when compressing neural networks. Thus, TS decomposition is able to pass gradients directly from one mode to another (due to the first advantage), while having a storage complexity that does not scale exponentially with the number of dimensions.
    \item Although not shown directly through the theoretical storage complexity analysis, the third advantage of TS decomposition is that allowing consecutive factor tensors to have different latent ranks help reduce the storage requirements, especially when there is a mode with a very large size (reflected through a high rank) next to a mode with a very small size (reflected through a low rank). If TT or TR decomposition is used, to keep the approximation error low, the shared rank between these two factor tensors needs to be large to accommodate the mode with the large rank. However, in TS decomposition, we can use the core tensor ring connections to pass down the information without using a large latent rank for the factor tensor corresponding to the mode of the small rank.
\end{enumerate}

\section{Preliminaries}

\renewcommand{\arraystretch}{1.5}
\begin{table}[t]
\centering
\begin{tabular}{|c|c|}
\hline
\textbf{Name} & \textbf{Mathematical Notation} \\
\hline
\hline
Scalar & $y$ \\\hline
Vector of size $I_1$ & $\mathbf{y} \in \mathbb{R}^{I_1}$ \\\hline
Matrix of size $I_1 \times I_2$& $\mathbf{Y} \in \mathbb{R}^{I_1 \times I_2}$\\
\hline
Real-valued tensor of order-$N$ &$ \mathcal{Y} \in \mathbb{R}^{I_1 \times I_2 \times \cdots \times I_N}$ \\ \hline
${(i_1, i_2,\ldots, i_N)}$-th entry of $\mathcal{Y}$ & $y_{i_1 i_2 \cdots i_N}$ \\ \hline
Mode permutation according to $\mathbf{v}$& $\vec{\mathcal{Y}}^{\mathbf{v}}$ \\ \hline
Circular Mode permutation according to $\mathbf{n}$& $\vec{\mathcal{Y}}^{\mathbf{n}}$ \\ \hline
Classic mode-n unfolding of a tensor & $\mathcal{Y}_{(n)}$ \\\hline
Classic mode-n folding of a tensor & $\text{Fold}_{(n)}$ \\\hline
Circular mode-n unfolding of a tensor & $\mathcal{Y}_{<n>}$ \\\hline
Circular mode-n folding of a tensor & $\text{Fold}_{<n>}$ \\\hline
Generalised unfolding of a tensor  & $\mathcal{Y}_{[\mathbf{v};d]}$ \\\hline
Generalised row vectorization of a tensor  & $\mathcal{Y}_{[\mathbf{v};0]}$ \\\hline
Generalised column vectorization of a tensor  & $\mathcal{Y}_{[\mathbf{v};N]}$ \\\hline
Generalised folding of a tensor  & $\text{Fold}_{[\mathbf{v};d]}$ \\\hline
Tensor contractions & $\times^{i_1,i_2,\ldots,i_d}_{j_1,j_2,\ldots,j_d}$ \\\hline
Frobenius norm of $\mathcal{Y}$& $\| \mathcal{Y} \|_F$\\\hline
\end{tabular}
\caption{Mathematical notations used}
\label{tab:mathnotation}
\end{table}
\renewcommand{\arraystretch}{1}

Mathematical notations are summarised in Table \ref{tab:mathnotation}. Scalars are denoted by the lowercase letter, $\mathit{y}$. Vectors are denoted by the bold lowercase letter, $\mathbf{y}$. Matrices are denoted by bold capital letters, $\mathbf{Y}$. Tensors are denoted by uppercase calligraphic letters, $\mathcal{Y} \in \mathbb{R}^{I_1 \times I_2 \times \cdots \times I_N}$. The element-wise entry at a location ${(i_1, i_2,\ldots, i_N)}$ of $\mathcal{Y}$ is denoted by $y_{i_1 i_2 \ldots i_N} = \mathcal{Y} ( i_1 , i_2 ,\ldots,i_N )$. The frobenius norm of an order-$N$ tensor, $\mathcal{Y}$, is denoted as $\| \mathcal{Y} \|_F = \sqrt{\sum_{i_1, i_2, \ldots, i_N}^{I_1, I_2, \ldots, I_N} y_{i_1 i_2 \ldots i_N}^2}$.

The circular mode permutation involves circularly shifting the mode indexes in $\mathcal{Y}$. Let $\mathbf{n} = (n_1,n_2,\ldots,n_N)$ be a circular reordering of $(1,2,\ldots,N)$, then the circular mode permutation of $\mathcal{Y}$ is denoted as $\vec{\mathcal{Y}}^{\mathbf{n}} \in \mathbb{R}^{I_{n_1} \times I_{n_2} \times...\times I_{n_N}}$.

Unfolding operations convert tensors into matrices. The classic mode-$n$ unfolding of a tensor $\mathcal{Y}$ is denoted as $\mathbf{Y}_{(n)}(\overline{i_{n}}, \overline{i_1 i_2 \cdots i_{n-1} i_{n+1} \cdots i_{N}}) \in \mathbb{R}^{I_n \times \prod_{k \neq n}^N I_k}$ \cite{tkd}. The circular mode-$n$ unfolding of a tensor $\mathcal{Y}$ represents the classic mode-n unfolding with the circularly re-arranged second-mode. More specifically, the circular mode-n unfolding of $\mathcal{Y}$ is denoted as $\mathbf{Y}_{<n>}(\overline{i_{n}}, \overline{i_{n+1} i_{n+2} \cdots i_{N} i_1 i_2 \cdots i_{k-1}}) \in \mathbb{R}^{I_n \times (\prod_{k = n+1}^N I_k) \cdot (\prod_{j = 1}^{n-1} I_j)}$ \cite{zhao2016tensor}. The inverse operations of the classic mode-$n$ unfolding and the circular mode-$n$ unfolding are denoted as $\text{Fold}_{(n)}$ and $\text{Fold}_{<n>}$, respectively.

It is possible to extend the idea of mode-$n$ unfolding further to yield generalised unfolding, which allows for the permutation of modes through a re-ordering vector $\mathbf{v}$ and unfolds the tensor into two modes, where by the first mode is from the first $d$ permuted modes, and the second mode is from the rest $(N-d)$ permuted modes. For a dimension permutation vector, $\mathbf{v}$, of $(1,2,...,N)$, a generalised tensor unfolding is denoted as (Definition 2 in \cite{zheng2021fully})
\begin{equation}
\mathbf{Y}_{[\mathbf{v};d]}(\overline{i_{v_1} \cdots i_{v_d}}, \overline{i_{v_{d+1}} \cdots i_{v_N}})  \in \mathbb{R}^{ \prod_{k = 1}^d I_{v_k} \times \prod_{j = d+1}^N I_{v_j}} = \vec{\mathcal{Y}}^{\mathbf{v}}(i_{v_1}, i_{v_2}, \ldots, i_{v_d})
\end{equation}

Two special cases of generalised unfolding are the generalised row vectorization and generalised column vectorization \cite{twd}, where the generalised row vectorization is denoted as
\begin{equation}
    \mathbf{y}_{[\mathbf{v};0]}(\overline{i_{v_1} \cdots i_{v_d} i_{v_{d+1}} \cdots i_{v_N} })  \in \mathbb{R}^{ 1 \times (\prod_{k = 1}^d I_{v_k}) \cdot (\prod_{j = d+1}^N I_{v_j})} = \vec{\mathcal{Y}}^{\mathbf{v}}(i_{v_1}, i_{v_2}, \ldots, i_{v_d})
\end{equation}

and the generalised column vectorization as
\begin{equation}
    \mathbf{y}_{[\mathbf{v};N]}(\overline{i_{v_1} \cdots i_{v_d} i_{v_{d+1}} \cdots i_{v_N} })  \in \mathbb{R}^{ (\prod_{k = 1}^d I_{v_k}) \cdot (\prod_{j = d+1}^N I_{v_j}) \times 1} = \vec{\mathcal{Y}}^{\mathbf{v}}(i_{v_1}, i_{v_2}, \ldots, i_{v_d})
\end{equation}

The inverse operations of the generalised unfolding, the generalised row vectorization, and the generalised column vectorization are denoted as $\text{Fold}_{[\mathbf{v};d]}$, $\text{Fold}_{[\mathbf{v};0]}$ and $\text{Fold}_{[\mathbf{v};N]}$, respectively.

All tensor contractions used in this paper are generalised tensor contractions, which are only allowed between modes with common size and are non-commutative and non-associative \cite{zheng2021fully}. They are equivalent to the MATLAB built-in function, $tensorprod(A, B, dimA, dimB)$. Different from the classic tensor contractions, generalised tensor contraction "absorbs" the contracted modes and concatenates the remaining modes of the right-hand tensor to the remaining modes of the left-hand tensor. Assume $\mathcal{A} \in \mathbb{R}^{J_1 \times J_2 \times \cdots \times J_M}$ and $\mathcal{Y} \in \mathbb{R}^{I_1 \times I_2 \times \cdots \times I_N}$ have $d$ common modes $( 0< d \le \min\{M,N\}, \ d \in \mathbb{Z} )$. Define a vector $\mathbf{a_v}$ to be a permutation of $(1,2,\ldots,M)$, and a vector $\mathbf{y_v}$ to be a permutation of $(1,2,\ldots,N)$. If we were to contract all $d$ common modes, we let the first $d$ entries of $\mathbf{a_v}$ and $\mathbf{y_v}$ correspond to the $d$ common modes, i.e., $J_{\mathbf{a_v}(k)} = I_{\mathbf{y_v}(k)}, \ k=1,2,\ldots, d$. The remaining $(M-d)$ entries of $\mathbf{a_v}$ then satisfy $\mathbf{a_v}(k+1) < \mathbf{a_v}(k+2) < \mathbf{a_v}(M)$. Similarly, the remaining $(N-d)$ entries of $\mathbf{y_v}$ satisfy $\mathbf{y_v}(k+1) < \mathbf{y_v}(k+2) < \mathbf{y_v}(N)$. Then, the tensor contraction is expressed as (Definition 3 in \cite{zheng2021fully})

\begin{equation}
    \mathcal{A} \times_{\mathbf{a_v}(1), \mathbf{a_v}(2), \ldots, \mathbf{a_v}(d)}^{\mathbf{y_v}(1), \mathbf{y_v}(2), \ldots, \mathbf{y_v}(d)} \mathcal{Y} = \text{Fold}_{[(1,2,\ldots, M+N-2d);M-d]} (\mathbf{A}^{\mathsf{T}}_{[\mathbf{a_v};d]}  \mathbf{Y}_{[\mathbf{y_v};d]}) \in \mathbb{R}^{J_{d+1} \times J_{d+2} \times \cdots \times J_{M} \times I_{d+1} \times I_{d+2} \times \cdots \times I_{N}}
\end{equation}

\section{Tensor Star (TS) Decomposition }

\subsection{Definition}
As shown in Figure \ref{fig:tsd}, TS decomposition decomposes an order-$N$ tensor $\mathcal{X} \in \mathbb{R}^{I_1 \times I_2 \times \cdots \times I_N}$ into $N$ order-$3$ factor tensors $\mathcal{G}_k \in \mathbb{R}^{R_{k,1}\times I_k \times R_{k,2}}$ and $N$ order-$4$ core tensors $C_k \in \mathbb{R}^{R_{k,2} \times L_k \times L_{k+1}\times R_{k+1,1}}$ for $k=1,2,\ldots,N$. $L_{N+1} = L_1$, $R_{N+1,1} = R_{1,1}$ and $R_{N+1,2} = R_{1,2}$.

Using tensor contractions, we can now express the tensor $\mathcal{X}$ as 
\begin{equation}
    \begin{split}
    \mathcal{X} = \mathcal{G}_1 \times_3  \mathcal{C}_1 \times_5 \mathcal{G}_2 \times_{4,6}^{2,1}  \mathcal{C}_2 \times_{6} \mathcal{G}_3 \times_{5,7}^{2,1}  \mathcal{C}_3 \times_{7} \cdots \times_{k+3} \mathcal{G}_k \times_{k+2,k+4}^{2,1} \mathcal{C}_k \times_{k+4} \cdots \times_{N+3} \mathcal{G}_N \times_{1,3,N+2,N+4}^{4,3,2,1} \mathcal{C}_N
    \end{split}
    \label{equ:tsd_contraction}
\end{equation}
To express (\ref{equ:tsd_contraction}) in a compact form, we define the TS decomposition operator $TSD(\mathcal{G}_1, \mathcal{C}_1, \mathcal{G}_2, \mathcal{C}_2, \ldots, \mathcal{G}_N, \mathcal{C}_N)$ or simply $TSD(\{ \mathcal{G}_k, \mathcal{C}_k\}^N_{k=1})$. In this way, each element in $\mathcal{X}$ can be expressed as
\begin{equation}
    \begin{split}
    \mathcal{X}(i_1,i_2,...,i_N) =
    & \sum_{r_{1,1}=1}^{R_{1,1}}\sum_{r_{1,2}=1}^{R_{1,2}}\sum_{r_{2,1}=1}^{R_{2,1}}\sum_{r_{2,2}=1}^{R_{2,2}}\cdots\sum_{r_{N,1}=1}^{R_{N,1}}\sum_{r_{N,2}=1}^{R_{N,2}}\sum_{l_1=1}^{L_1}\sum_{l_2=1}^{L_2}\cdots \sum_{l_N=1}^{L_N} \\
     & \{ \mathcal{G}_1(r_{1,1},i_1,r_{1,2}) \ \mathcal{C}_1(r_{1,2}, l_1, l_{2}, r_{2,1}) \ \mathcal{G}_2(r_{2,1},i_2,r_{2,2}) 
 \ \mathcal{C}_2(r_{2,2}, l_2, l_{3}, r_{3,1}) \cdots \\
    & \ \ \mathcal{G}_k(r_{k,1},i_k,r_{k,2}) \ \mathcal{C}_k(r_{k,2}, l_k, l_{k+1}, r_{k+1,1}) \cdots \mathcal{G}_N(r_{N,1},i_N,r_{N,2}) \ \mathcal{C}_N(r_{N,2}, l_N, l_{1}, r_{1,1}) \}
    \end{split}
\end{equation}

The TS decomposition factor ranks are defined as $R_{k,s} : 1 \leq k \leq N \ and \  s \in \{1, 2\}$, while the TS decomposition ring ranks are defined as $L_k : 1 \leq k \leq N$.

\subsection{Properties}
Although both TT decomposition and TR decomposition enable super-compression, their approaches to breaking "the curse of dimensionality" impede the ability to capture the intrinsic interactions between non-adjacent dimensions. Any mode interaction can be fully exploited in the FCTN decomposition and TW decomposition. However, they both have storage complexities scaling exponentially with the number of dimensions, $N$. On the otherhand, as shown in Theorem \ref{theorem2} , the proposed TS decomposition is able to utilise the mode interactions between any two arbitrary modes, while having a storage complexity of $\mathcal{O}(NIR^2+NR^4)$, which is significantly smaller than the storage complexity of TW decomposition, $\mathcal{O}(NIR^3+R^N)$, and the storage complexity of FCTN decomposition,  $\mathcal{O}(NIR^{N-1})$.
\begin{figure}[t] 
    \centering
    \includegraphics[width = 0.4\hsize]{./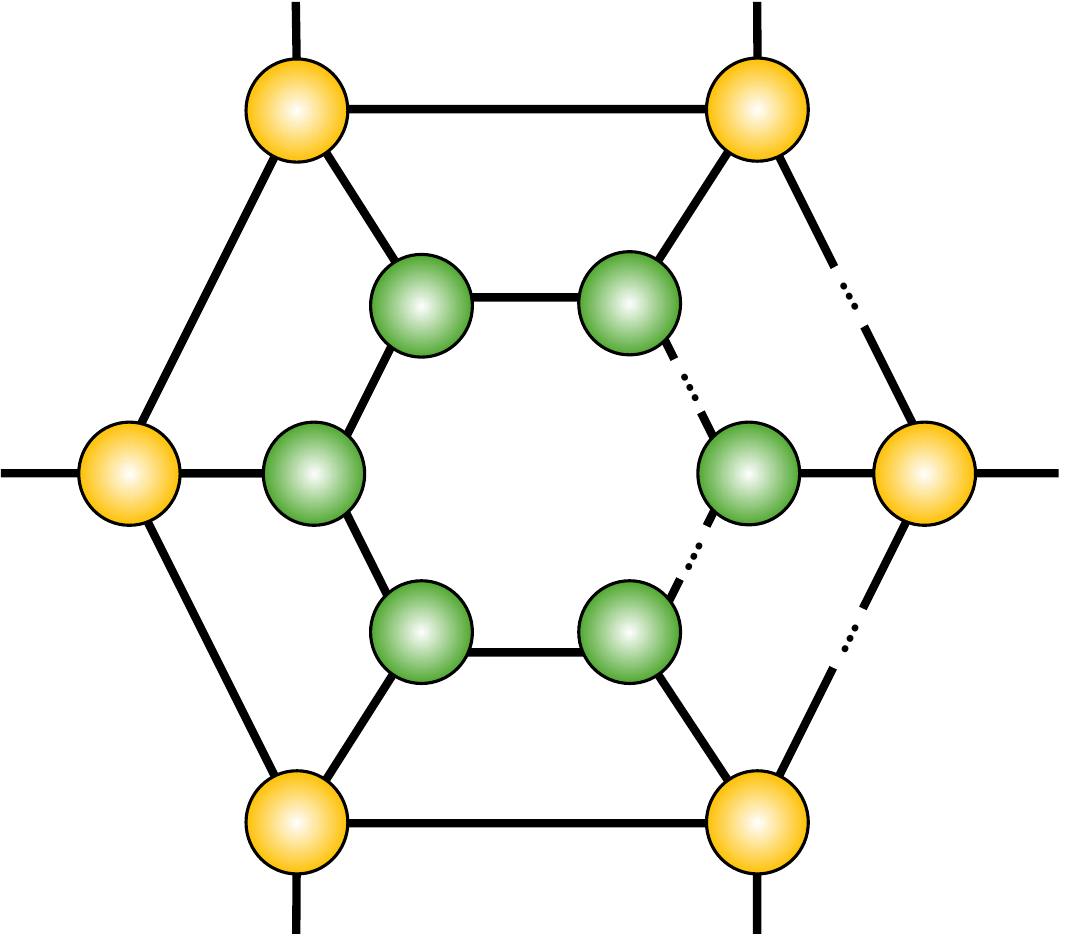}
    \caption{Tensor Wheel (TW) decomposition  with a ring-shaped core}
    \label{fig:twd_ring}
\end{figure}

\begin{remark}
The TS decomposition has similar desirable properties to TW decomposition, while having a storage complexity which does not scale exponentially with the number of dimensions, $N$. As shown in Figure \ref{fig:twd_ring}, a similar framework can be established by decomposing the core tensor in TW decomposition into sub-core tensors connected in a ring topology, which has not yet been formally proposed and studied. Furthermore, TW decomposition with the so decomposed core tensor still has one shared rank between every two consecutive factor tensors.
\end{remark}

\begin{remark}
In TT, TR, and TW decomposition, there must be one rank (connection) shared between every two consecutive factor tensors. For FCTN decomposition, this is true also for any two arbitrary factor tensors. In TS decomposition, for every two consecutive modes, we treat the problem as a decomposition problem of an order-$4$ tensor where only a Tucker-$2$ model is applied. This not only allows every two consecutive factor tensors to have different latent ranks. In this way, there will also be two more modes in the core tensors, which allow the core tensors to be connected in a ring topology.
\end{remark}

\begin{theorem}[\textbf{Circular dimensional permutation invariance}]
Assume that an order-$N$ tensor $\mathcal{X} \in \mathbb{R}^{I_1 \times I_2 \times \cdots \times I_N}$ has a Tensor Star decomposition $TSD(\{ \mathcal{G}_k, \mathcal{C}_k\}^N_{k=1})$. Then, circularly reordering the dimensions of $\mathcal{X}$ does not change its TS decomposition factor tensors and TS decomposition core tensors. Let $\mathbf{n} = (n_1, n_2, \ldots, n_N)$ be a circular dimensional permutation of $(1,2,\ldots,N)$, then $\vec{\mathcal{X}}^{\mathbf{n}}$ can be expressed as
$$\vec{\mathcal{X}}^{\mathbf{n}} = TSD(\{ \mathcal{G}_{n_k}, \mathcal{C}_{n_k}\}^N_{k=1})$$
\label{theorem1}
\end{theorem}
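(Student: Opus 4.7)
The plan is to prove the statement by working directly with the element-wise formula for the TS decomposition and exploiting two elementary facts: the scalar product over $k=1,\ldots,N$ is commutative (so we may cyclically rotate the order of the factors), and the ring-shaped wrap-around conventions $L_{N+1}=L_1$, $R_{N+1,1}=R_{1,1}$, $R_{N+1,2}=R_{1,2}$ are themselves invariant under a circular shift of the index $k$.

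First, I would fix a circular permutation $\mathbf{n}=(n_1,\ldots,n_N)$, which by definition of circular means there is an integer shift $s$ with $n_k \equiv k+s-1 \pmod{N}$ (plus one to land in $\{1,\ldots,N\}$). Applying the definition of dimensional permutation from the preliminaries, the evaluation of $\vec{\mathcal{X}}^{\mathbf{n}}$ at $(i_{n_1},\ldots,i_{n_N})$ equals $\mathcal{X}(i_1,\ldots,i_N)$, and I would then substitute the TS element-wise expansion given for $\mathcal{X}$.

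Second, I would perform a change of dummy summation variables: set $r'_{k,t}=r_{n_k,t}$ for $t\in\{1,2\}$ and $l'_k=l_{n_k}$. Because the original sum ranges exhaustively over each $r_{k,t}$ in $\{1,\ldots,R_{k,t}\}$ and each $l_k$ in $\{1,\ldots,L_k\}$, this substitution is a bijection onto the corresponding ranges for the reindexed tensors $\mathcal{G}_{n_k}$ and $\mathcal{C}_{n_k}$, so the numerical value of the sum is unchanged. Simultaneously, I would use scalar commutativity to re-order the product so that the factor pair $(\mathcal{G}_{n_k},\mathcal{C}_{n_k})$ appears in the $k$-th slot. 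After this relabeling, the resulting expression is, term-for-term, the TS element-wise formula applied to $\{\mathcal{G}_{n_k},\mathcal{C}_{n_k}\}_{k=1}^N$ evaluated at $(i_{n_1},\ldots,i_{n_N})$, which gives the claimed identity.

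The main obstacle, and really the only subtle point, is verifying that all contracted pairs of indices remain correctly matched after the circular rotation: in particular that the $l_{k+1}$-th slot of $\mathcal{C}_k$ still shares its summation index with the $l_{k+1}$-th slot of $\mathcal{C}_{k+1}$, and that the $r_{k+1,1}$ slot of $\mathcal{C}_k$ still pairs with the $r_{k+1,1}$ slot of $\mathcal{G}_{k+1}$, even when $k=N$ wraps around to $k=1$. This is a direct consequence of the wrap-around conventions $L_{N+1}=L_1$ and $R_{N+1,t}=R_{1,t}$, which ensure that the ring has no distinguished starting point; hence a circular shift by any $s$ produces a structurally identical contraction graph and the invariance follows. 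Because a general dimensional permutation would break these cyclic matchings, the restriction to \emph{circular} permutations in the theorem statement is essential, and I would note this as the reason the argument does not extend to arbitrary permutations.
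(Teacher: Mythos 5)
Your proposal is correct and follows essentially the same route as the paper's own proof: both start from the element-wise expansion of $TSD(\{\mathcal{G}_k,\mathcal{C}_k\}_{k=1}^N)$, circularly relabel the dummy summation indices $r_{k,s}$ and $l_k$ along with the mode indices, and observe that the wrap-around conventions $L_{N+1}=L_1$, $R_{N+1,s}=R_{1,s}$ keep every contracted pair matched, so the result is again a TS decomposition with the factors and cores cyclically shifted. If anything, your write-up is more explicit than the paper's about why the argument is restricted to \emph{circular} permutations, which is a worthwhile clarification.
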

\begin{proof} Let an order-$N$ tensor $\mathcal{X} \in \mathbb{R}^{I_1 \times I_2 \times \cdots \times I_N}$ have a Tensor Star decomposition of $TSD(\{ \mathcal{G}_k, \mathcal{C}_k\}^N_{k=1})$. Then, its element-wise representations can be written as 
\begin{equation}
    \begin{split}
    \mathcal{X}(i_1,i_2,...,i_N) =
    & \sum_{r_{1,1}=1}^{R_{1,1}}\sum_{r_{1,2}=1}^{R_{1,2}}\sum_{r_{2,1}=1}^{R_{2,1}}\sum_{r_{2,2}=1}^{R_{2,2}}\cdots \sum_{r_{N,1}=1}^{R_{N,1}}\sum_{r_{N,2}=1}^{R_{N,2}}\sum_{l_1=1}^{L_1}\sum_{l_2=1}^{L_2}\cdots \sum_{l_N=1}^{L_N} \\
     & \{ \mathcal{G}_1(r_{1,1},i_1,r_{1,2}) \ \mathcal{C}_1(r_{1,2}, l_1, l_{2}, r_{2,1}) \ \mathcal{G}_2(r_{2,1},i_2,r_{2,2}) 
 \ \mathcal{C}_2(r_{2,2}, l_2, l_{3}, r_{3,1}) \cdots \\
    & \ \ \mathcal{G}_k(r_{k,1},i_k,r_{k,2}) \ \mathcal{C}_k(r_{k,2}, l_k, l_{k+1}, r_{k,1}) \cdots \mathcal{G}_N(r_{N,1},i_N,r_{N,2}) \ \mathcal{C}_N(r_{N,2}, l_N, l_{1}, r_{1,1}) \}
    \end{split}
\end{equation}

Let $\mathbf{n} = (n_1, n_2, \ldots, n_N)$ be a circular dimensional permutation of $(1,2,\ldots,N)$. Then, the element-wise expression for $\vec{\mathcal{X}}^{\mathbf{n}}$ can be written as
\begin{equation}
    \begin{split}
    \vec{\mathcal{X}}^{\mathbf{n}}(i_{n_1},i_{n_2},\ldots ,i_{n_N}) =
    & \sum_{r_{1,1}=1}^{R_{1,1}}\sum_{r_{1,2}=1}^{R_{1,2}}\sum_{r_{2,1}=1}^{R_{2,1}}\sum_{r_{2,2}=1}^{R_{2,2}}\cdots\sum_{r_{N,1}=1}^{R_{N,1}}\sum_{r_{N,2}=1}^{R_{N,2}}\sum_{l_1=1}^{L_1}\sum_{l_2=1}^{L_2}\cdots \sum_{l_N=1}^{L_N} \\
     & \{\mathcal{G}_{n_1}(r_{{n_1},1},i_{n_1},r_{{n_1},2}) \ \mathcal{C}_{n_1}(r_{{n_1},2}, l_{n_1}, l_{{n_2}}, r_{{n_2},1}) \ \\
    & \ \ \mathcal{G}_{n_2}(r_{{n_2},1},i_{n_2},r_{{n_2},2}) 
 \ \mathcal{C}_2(r_{{n_2},2}, l_{n_2}, l_{{n_3}}, r_{{n_3},1}) \cdots \\
    & \ \ \mathcal{G}_{n_k}(r_{{n_k},1},i_{n_k},r_{{n_k},2}) \ \mathcal{C}_{n_k}(r_{{n_k},2}, l_{n_k}, l_{{n_{k+1}}}, r_{{n_k},1})\cdots \\
    & \ \ \mathcal{G}_{n_N}(r_{{n_k},1},i_{n_k},r_{{n_k},2}) \ \mathcal{C}_{n_k}(r_{{n_k},2}, l_{n_k}, l_{n_1}, r_{{n_1},1}) \}
    \end{split}
    \label{equ:proof_circular_invariance}
\end{equation}

Equation (\ref{equ:proof_circular_invariance}) can be shortened and written with tensor contractions:
\begin{equation}
    \begin{split}
    \vec{\mathcal{X}}^{\mathbf{n}} = &\mathcal{G}_{n_1} \times_3  \mathcal{C}_{n_1} \times_5 \mathcal{G}_{n_2} \times_{4,6}^{2,1}  \mathcal{C}_{n_2} \times_{6} \mathcal{G}_{n_3} \times_{5,7}^{2,1}  \mathcal{C}_{n_3} \times_{7} \cdots \\
    &\times_{k+3} \mathcal{G}_{n_k} \times_{k+2,k+4}^{2,1} \mathcal{C}_{n_k} \times_{k+4} \cdots \times_{N+3} \mathcal{G}_{n_N} \times_{1,3,N+2,N+4}^{4,3,2,1} \mathcal{C}_{n_N}
    \end{split}
\end{equation}

Therefore, the circularly permuted tensor $\vec{\mathcal{X}}^{\mathbf{n}}$ can be expressed as
$$\vec{\mathcal{X}}^{\mathbf{n}} = TSD(\{ \mathcal{G}_{n_k}, \mathcal{C}_{n_k}\}^N_{k=1})$$

\end{proof}

\begin{remark}
The TS decomposition allows for a direct correlation characterisation between any two dimensions of the original tensor $\mathcal{X}$. Consider the case where all core tensors $\{\mathcal{C}_k \}^N_{k=1}$ and one factor tensor $\mathcal{G}_k$ are "absorbed" into one tensor $\mathcal{B}$; then $\mathcal{B}$ becomes similar to the core tensor in FCTN decomposition. All other factor tensors $\{\mathcal{G}_i \}_{i=\{1,2,\ldots, N\}\setminus k} $ are all directly connected to $\mathcal{B}$ with double connections.
\end{remark}

\begin{theorem}[\textbf{Direct correlation characterisation between any two dimensions}]
Assume that an order-$N$ tensor $\mathcal{X} \in \mathbb{R}^{I_1 \times I_2 \times \cdots \times I_N}$ has a Tensor Star decomposition of $TSD(\{ \mathcal{G}_k, \mathcal{C}_k\}^N_{k=1})$ and let $\mathbf{n} = (n_1, n_2, \ldots, n_N)$ be a circular dimensional permutation of $(1,2,\ldots,N)$. Then, $\vec{\mathcal{X}}^{\mathbf{n}}$ can be rearranged into
\begin{equation}
    \begin{split}
        \vec{\mathcal{X}}^{\mathbf{n}} = & (\mathcal{G}_{n_1} \times^{2N,1}_{1,3} \mathcal{C}_{all} \times_{2k-2,2k-1}^{1,3} \mathcal{G}_{n_k} ) \times_{2,3}^{1,3} \mathcal{G}_{n_2} \times_{4,5}^{1,3} \mathcal{G}_{n_3} \times_{6,7}^{1,3} \mathcal{G}_{n_4} \times_{8,9}^{1,3} \\
        & \cdots \times_{k+1,k+2}^{1,3} \mathcal{G}_{n_{k-1}} \times_{k+3,k+4}^{1,3} \mathcal{G}_{n_{k+1}} \times_{k+5,k+6}^{1,3} \cdots \times_{2N-2,2N-1}^{1,3} \mathcal{G}_{n_{N}}
    \end{split}
\end{equation}
where $\mathcal{C}_{all} = \mathcal{C}_{n_1} \times_{3}^{2} \mathcal{C}_{n_2} \times_{5}^{2} \mathcal{C}_{n_3} \times_{7}^{2} \ldots \times_{2N-3}^{2} \mathcal{C}_{n_{N-1}} \times_{2,2N-1}^{3,2} \mathcal{C}_{n_N}$. Since $\mathcal{G}_{n_1}$ and $\mathcal{G}_{n_k}$ are directly connected through $\mathcal{C}_{all}$, any other two arbitrary dimensions of $\mathcal{X}$ can be directly connected through $\mathcal{C}_{all}$.
\label{theorem2}
\end{theorem}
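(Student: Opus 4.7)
The plan is to use Theorem~\ref{theorem1} to reduce $\vec{\mathcal{X}}^{\mathbf{n}}$ to the TS decomposition $TSD(\{ \mathcal{G}_{n_k}, \mathcal{C}_{n_k}\}^N_{k=1})$, and then to regroup the contractions inside this expression so that all core tensors are contracted together first, before the factor tensors are attached. Because a tensor contraction is nothing but an indexed sum of scalar products, its evaluation order is immaterial for the result; the only nontrivial work will be the mode-index bookkeeping required by the generalised-contraction convention from Definition~3 of \cite{zheng2021fully}.

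First I would pass to the element-wise expansion supplied by Theorem~\ref{theorem1} and partition the summation variables into the ring indices $\{l_k\}_{k=1}^{N}$ and the factor ranks $\{r_{k,1}, r_{k,2}\}_{k=1}^{N}$. The key structural observation is that each $l_k$ appears only in products of pairs of adjacent core tensors, while each $r_{k,s}$ ties a factor tensor $\mathcal{G}_{n_k}$ to its neighbouring core tensor; hence the $l_k$-sums may be carried out independently of, and prior to, the $r$-sums. Doing so produces a single tensor with $2N$ free $R$-modes which by construction is exactly $\mathcal{C}_{all}$: its contraction diagram is the ring in which $\mathcal{C}_{n_k}$ meets $\mathcal{C}_{n_{k+1}}$ along their shared $L$-mode for $k = 1, \ldots, N-1$, with the last contraction against $\mathcal{C}_{n_N}$ closing the ring by simultaneously eliminating $L_N$ and $L_1$. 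Tracking the surviving mode order after each pairwise contraction (the remaining modes of the right-hand tensor being appended to those of the left, preserving relative order) reproduces the exponents $\times_3^2, \times_5^2, \ldots, \times_{2N-3}^2$ and the ring-closing $\times_{2,2N-1}^{3,2}$ in the definition of $\mathcal{C}_{all}$.

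Finally I would attach the $N$ factor tensors to $\mathcal{C}_{all}$ via its $R$-modes, in the order prescribed by the theorem: first $\mathcal{G}_{n_1}$ (over the trailing pair $(2N,1)$ of $\mathcal{C}_{all}$), then $\mathcal{G}_{n_k}$ (over the pair $(2k-2,2k-1)$ that survives the first attachment), and then the remaining $\mathcal{G}_{n_j}$ in index order over the successive pairs $(2,3),(4,5),\ldots$ that remain. Because each $\mathcal{G}_{n_j}$ shares precisely its first and third modes with one such $R$-pair, this attachment reproduces the original sums over $r_{j,1}$ and $r_{j,2}$ and therefore equals $\vec{\mathcal{X}}^{\mathbf{n}}$. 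The direct connection between $\mathcal{G}_{n_1}$ and $\mathcal{G}_{n_k}$ through the single intermediate tensor $\mathcal{C}_{all}$ is then manifest, and since $\mathbf{n}$ is an arbitrary circular permutation any two modes of $\mathcal{X}$ can be placed in those positions. The main obstacle is precisely this mode-index bookkeeping: each generalised contraction renumbers the surviving modes, so verifying that the exponents $(2k-2,2k-1)$ and the subsequent $(2,3),(4,5),\ldots$ really align with the free modes of the intermediate tensors requires a careful induction on the contraction step, but no ingredient beyond the commutativity of the underlying scalar sums.
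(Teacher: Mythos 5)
Your proposal is correct and follows essentially the same route as the paper's own proof: invoke Theorem~\ref{theorem1} to pass to $TSD(\{\mathcal{G}_{n_k},\mathcal{C}_{n_k}\}_{k=1}^N)$, contract the ring of core tensors first to form $\mathcal{C}_{all}$, then attach $\mathcal{G}_{n_1}$ and $\mathcal{G}_{n_k}$ followed by the remaining factor tensors, and conclude by varying $\mathbf{n}$ and $k$. Your element-wise justification (the $l_k$-sums involve only adjacent cores and so may be carried out before the $r_{k,s}$-sums) makes explicit the regrouping step that the paper simply asserts via contraction identities, but the argument is the same.
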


\begin{proof} Assume that an order-$N$ tensor $\mathcal{X} \in \mathbb{R}^{I_1 \times I_2 \times \ldots \times I_N}$ has a Tensor Star decomposition $TSD(\{ \mathcal{G}_k, \mathcal{C}_k\}^N_{k=1})$. From Theorem \ref{theorem1}, if $\mathbf{n} = (n_1, n_2, \ldots, n_N)$ is a circular dimensional permutation of $(1,2,\cdots,N)$, then $\vec{\mathcal{X}}^{\mathbf{n}}$ can be written as
$$\vec{\mathcal{X}}^{\mathbf{n}} = TSD(\{ \mathcal{G}_{n_k}, \mathcal{C}_{n_k}\}^N_{k=1})$$

Since the TS decomposition core tensors are connected in a ring shape, tensor contraction can be applied to all $N$ order-$4$ core tensors to produce $\mathcal{C}_{all} \in \mathbb{R}^{R_{{n_1},2} \times R_{{n_2},1} \times R_{{n_2},2} \times R_{{n_3},1} \times \cdots  \times R_{{n_k},1} \times R_{{n_k},2} \times \cdots \times R_{{n_N},1} \times R_{{n_N},2} \times R_{{n_1},1}}$ as
\begin{equation}
\mathcal{C}_{all} = \mathcal{C}_{n_1} \times_{3}^{2} \mathcal{C}_{n_2} \times_{5}^{2} \mathcal{C}_{n_3} \times_{7}^{2} \cdots \times_{2N-3}^{2} \mathcal{C}_{n_{N-1}} \times_{2,2N-1}^{3,2} \mathcal{C}_{n_N} 
\end{equation}

It is now straightforward to show that $\mathcal{G}_{n_1} \in \mathbb{R}^{R_{n_1,1} \times I_{n_1} \times R_{n_1,2}}$ and $\mathcal{G}_{n_k} \in \mathbb{R}^{R_{n_k,1} \times I_{n_k} \times R_{n_k,2}}$ are connected through $\mathcal{C}_{all}$. This can be shown by obtaining $\mathcal{A} \in \mathbb{R}^{I_{n_1} \times R_{n_2,1} \times R_{n_2,2} \times R_{n_3,1} \times \cdots \times R_{n_N,1} \times R_{n_N,2} \times I_{n_k}} $ as
\begin{equation}
\mathcal{A} = \mathcal{G}_{n_1} \times^{2N,1}_{1,3} \mathcal{C}_{all} \times^{1,3}_{2k-2,2k-1} \mathcal{G}_{n_k} 
\end{equation}

The original $\vec{\mathcal{X}}^{\mathbf{n}}$ can then be obtained through combining all other $(N-2)$ factor tensors $\{ \mathcal{G}_{n_i}\}$,  where $i \in \{2, 3, \ldots, N\}\setminus k$, with $\mathcal{A}$, to yield
\begin{equation}
\vec{\mathcal{X}}^{\mathbf{n}} = (\mathcal{G}_{n_1} \times^{2N,1}_{1,3} \mathcal{C}_{all} \times^{1,3}_{2k-2,2k-1} \mathcal{G}_{n_k}) \times^{1,3}_{2,3} \mathcal{G}_{n_2} \times^{1,3}_{4,5} \mathcal{G}_{n_3} \times^{1,3}_{6,7} \mathcal{G}_{n_4} \times^{1,3}_{8,9} \cdots \times^{1,3}_{2N-2,2N-1} \mathcal{G}_{n_N}
\end{equation}

In this way, $\mathcal{G}_{n_1}$ and $\mathcal{G}_{n_k}$ have been proven to be directly connected through the core tensors. By varying the circular dimensional permutation vector $\mathbf{n}$ and $n_k$, it is obvious that any two arbitrary modes of $\mathcal{X}$ are directly connected through the core tensors.
\end{proof}

\subsection{Rank Bounds} When proposing a new tensor decomposition framework, it is essential to ensure that the low-rank property of the original tensor can be preserved. Thus, it is important to find the relation between the ranks of the original tensor and the latent ranks of TS decomposition. To this end, we next consider the rank bounds of TS decomposition. We note that TS decomposition provides more delicate control of the low-rankness than the FCTN or TW decompositions, as the Generalised Unfolding Rank Bound (GURB) of TS decomposition is always the product of $4$ terms and does not grow with the number of unfolded modes like in FCTN and TW decompositions.

\subsubsection{Mode-$n$ rank bound}

\begin{theorem}[\textbf{Mode-$n$ rank bound}]
Let an order-$N$ tensor $\mathcal{X} \in \mathbb{R}^{I_1 \times I_2 \times \cdots \times I_N}$ have a Tensor Star decomposition $TSD(\{ \mathcal{G}_k, \mathcal{C}_k\}^N_{k=1})$. For all $k=1,2,...,N$, the mode-$n$ rank bound of $\mathbf{X}_{(k)}$ and $\mathbf{X}_{<k>}$ can be expressed as
$$
\text{Rank}(\mathbf{X}_{(k)}) = \text{Rank}(\mathbf{X}_{<k>}) \leq R_{k,1}R_{k,2}, \ k=1,2,...,N
$$
\label{theorem3}
\end{theorem}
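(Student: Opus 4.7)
The plan is to exploit the locality of mode $k$ in the TS network: the factor tensor $\mathcal{G}_k$ is the only object that carries the physical index $i_k$, and it is attached to the rest of the network only through the two bonds of sizes $R_{k,1}$ and $R_{k,2}$. This immediately suggests a rank-one factorisation of $\mathbf{X}_{(k)}$ into two matrices whose common dimension is $R_{k,1}R_{k,2}$, which gives the desired bound.

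First, I would invoke Theorem \ref{theorem1} to reduce to the case $k=1$. Choosing the circular permutation $\mathbf{n}=(k,k+1,\ldots,N,1,\ldots,k-1)$ rewrites $\vec{\mathcal{X}}^{\mathbf{n}} = TSD(\{\mathcal{G}_{n_j},\mathcal{C}_{n_j}\}_{j=1}^N)$ so the mode of interest sits in position $1$. This avoids dragging a general index $k$ through the contraction formula.

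Next, starting from the element-wise formula for the TS decomposition, I would group the summations: pull the sum over $r_{k,1}$ and $r_{k,2}$ (and the single factor $\mathcal{G}_k(r_{k,1},i_k,r_{k,2})$) to the outside, and collect everything else (all remaining factor tensors, all core tensors, and all sums over the other ring indices $r_{j,s}$ and $l_j$) into a single expression $\mathcal{H}_k(r_{k,1},r_{k,2},i_1,\ldots,i_{k-1},i_{k+1},\ldots,i_N)$. After classical mode-$k$ unfolding I obtain the matrix factorisation
\begin{equation}
    \mathbf{X}_{(k)} \;=\; \mathbf{G}_{k,(2)} \, \mathbf{H}_{k,[\,\cdot\,]},
\end{equation}
where $\mathbf{G}_{k,(2)} \in \mathbb{R}^{I_k \times R_{k,1}R_{k,2}}$ is the classical mode-$2$ unfolding of $\mathcal{G}_k$ and $\mathbf{H}_{k,[\,\cdot\,]} \in \mathbb{R}^{R_{k,1}R_{k,2} \times \prod_{j\neq k} I_j}$ is the appropriate unfolding of $\mathcal{H}_k$ with its two rank-indices combined. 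The rank of a product is bounded by the smaller dimension, so $\mathrm{Rank}(\mathbf{X}_{(k)}) \le R_{k,1}R_{k,2}$.

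Finally, for the circular unfolding $\mathbf{X}_{<k>}$, I would note that it differs from the classic mode-$k$ unfolding $\mathbf{X}_{(k)}$ only by a column permutation induced by the cyclic reordering of the secondary indices $i_{k+1},\ldots,i_N,i_1,\ldots,i_{k-1}$ versus $i_1,\ldots,i_{k-1},i_{k+1},\ldots,i_N$. Column permutations preserve the rank, hence $\mathrm{Rank}(\mathbf{X}_{<k>}) = \mathrm{Rank}(\mathbf{X}_{(k)})$, giving the second half of the claim.

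The main obstacle is purely notational: the core tensors $\{\mathcal{C}_j\}$ and the ring indices $\{l_j\}$ couple every factor tensor to every other one, so one has to be careful to verify that after fixing $r_{k,1}$ and $r_{k,2}$ the remaining network really is a well-defined tensor $\mathcal{H}_k$ with exactly those two extra modes (plus the physical indices $i_j$, $j\neq k$). Once this bookkeeping is done the matrix-rank argument is immediate; no further analytic work is needed.
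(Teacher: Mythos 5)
Your proposal is correct and follows essentially the same route as the paper: isolate $\mathcal{G}_k$ from the rest of the network (the paper's $\mathcal{Y}_{all \setminus \mathcal{G}_{k}}$ is your $\mathcal{H}_k$), write the mode-$k$ unfolding as the product $(\mathbf{G}_{k})_{(2)}$ times an $R_{k,1}R_{k,2} \times \prod_{j\neq k} I_j$ matrix, and bound the rank by the inner dimension. The only cosmetic differences are that the paper places the distinguished mode at position $n_N$ rather than position $1$ and phrases the grouping via tensor contractions rather than element-wise sums; your explicit column-permutation argument for $\mathrm{Rank}(\mathbf{X}_{<k>}) = \mathrm{Rank}(\mathbf{X}_{(k)})$ is a detail the paper leaves implicit.
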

\begin{proof}
    Let an order-$N$ tensor $\mathcal{X} \in \mathbb{R}^{I_1 \times I_2 \times \cdots \times I_N}$ have a Tensor Star decomposition $TSD(\{ \mathcal{G}_k, \mathcal{C}_k\}^N_{k=1})$. Assume that $\vec{\mathcal{X}}^{\mathbf{n}}$ is a permuted $\mathcal{X}$, where $\mathbf{n} = (n_1, n_2, \ldots, n_N)$ is a circular dimensional permutation of $(1,2,\ldots,N)$. We first define a order-$(N+1)$ tensor $\mathcal{Y}_{all \setminus \mathcal{G}_{n_N}}$, obtained by contracting all other factor tensors $\{\mathcal{G}_{n_k}\}_{k=1}^{N-1}$and  core tensors $\{\mathcal{C}_{n_k}\}_{k=1}^{N}$, to give
\begin{equation}
    \begin{split}
    \mathcal{Y}_{all \setminus \mathcal{G}_{n_N}} = \ &\mathcal{G}_{n_1} \times_3  \mathcal{C}_{n_1} \times_5 \mathcal{G}_{n_2} \times_{4,6}^{2,1}  \mathcal{C}_{n_2} \times_{6} \mathcal{G}_{n_3} \times_{5,7}^{2,1}  \mathcal{C}_{n_3} \times_{7} \cdots  \\
    &\times_{k+3} \mathcal{G}_{n_k} \times_{k+2,k+4}^{2,1} \mathcal{C}_{n_k} \times_{k+4} \cdots \\
    &\times_{N+2} \mathcal{G}_{n_{N-1}} \times_{N+1,N+2}^{2,1} \mathcal{C}_{n_{N-1}} \times_{1,3,N+2}^{4,3,2} \mathcal{C}_{n_{N}}
    \end{split}
    \label{equ:y_to_isolate_g}
\end{equation}
where the order-$(N+1)$ tensor $\mathcal{Y}_{all \setminus \mathcal{G}_{n_N}} \in \mathbb{R}^{I_{n_1}\times I_{n_{2}}\times I_{n_3}\times \cdots \times I_{n_{N-2}}\times I_{n_{N-1}} \times R_{n_N,1} \times R_{n_N,2}}$.

Thus, $\vec{\mathcal{X}}^{\mathbf{n}}$ can be expressed as:
\begin{equation}
    \vec{\mathcal{X}}^{\mathbf{n}} = \mathcal{G}_{n_N} \times^{N,N+1}_{1,3} \mathcal{Y}_{all \setminus \mathcal{G}_{n_N}}
    \label{equ:isolate_g}
\end{equation}
To unfold the tensors in Equation (\ref{equ:isolate_g}), we first define a vector $\mathbf{y_v} = (N, N+1, 1, 2,\ldots, N-2, N-1)$ and obtain:
\begin{equation}
    \mathbf{X}_{<n_N>} = (\mathbf{G}_{n_N})_{(2)} (\mathbf{Y}_{all \setminus \mathcal{G}_{n_N}})_{[\mathbf{y_v};2]}
\end{equation}
where $(\mathbf{G}_{n_N})_{(2)} \in \mathbb{R}^{I_{n_N} \times R_{n_N,1} R_{n_N,2}}$ and $(\mathbf{Y}_{all \setminus \mathcal{G}_{n_N}})_{[\mathbf{y_v};2]} \in \mathbb{R}^{R_{n_N,1} R_{n_N,2} \times I_{n_1}  I_{n_2}  \cdots  I_{n_{N-2}}  I_{n_{N-1}}}$.

Since $\mathbf{n} = (n_1, n_2, \ldots, n_N)$ is a circular dimensional permutation of $(1,2,\ldots,N)$, for all $k=1,2,\ldots,N$, we can always let $n_N = k$, to arrive at:
\begin{equation}
    \mathbf{X}_{<k>} = (\mathbf{G}_{k})_{(2)} (\mathbf{Y}_{all \setminus \mathcal{G}_{k}})_{[\mathbf{y_v};2]}
    \label{equ:isolate_gk}
\end{equation}
Finally, this yields
\begin{equation}
\begin{split}
\text{Rank}(\mathbf{X}_{(k)}) = \text{Rank}(\mathbf{X}_{<k>}) & \leq \text{min}\{\text{Rank}((\mathbf{G}_{k})_{(2)}), \text{Rank}((\mathbf{Y}_{all \setminus \mathcal{G}_{k}})_{[\mathbf{y_v};2]}) \} \\
&\leq \text{Rank}((\mathbf{G}_{k})_{(2)}) \\
&\leq \text{min}\{I_k, R_{k,1}R_{k,2}\} \\
&\leq R_{k,1}R_{k,2}
\end{split}
\end{equation}
\end{proof}

\subsubsection{Generalised Unfolding Rank Bound}

Theorem \ref{theorem3} shows that the mode-$n$ rank is always bounded by the product of two factor ranks $R_{k,1}$ and $R_{k,2}$. Below we consider the rank bound of an unfolded $\mathcal{X}$, where its $d=2, 3, \ldots, N$ adjacent modes are unfolded into one dimension, i.e., $\mathbf{X}_{[1:N,d]} \in \mathbb{R}^{\prod^d_{i=1} I_i \times \prod^N_{j=d+1} I_j}$.

\begin{theorem}[\textbf{Generalised unfolding rank bound}]
Let an order-$N$ tensor $\mathcal{X} \in \mathbb{R}^{I_1 \times I_2 \times \cdots \times I_N}$ have a Tensor Star decomposition $TSD(\{ \mathcal{G}_k, \mathcal{C}_k\}^N_{k=1})$. Assume that $\vec{\mathcal{X}}^{\mathbf{n}}$ is a permuted $\mathcal{X}$, where $\mathbf{n} = (n_1, n_2, \ldots, n_N)$ is a circular dimensional permutation of $(1,2,\ldots,N)$. Then, for all $d=2,3,\ldots,N$, the generalised unfolding rank bound of $\mathbf{X}_{[n_1:n_N,d]}$ can be expressed as
$$
\text{Rank}(\mathbf{X}_{[n_1:n_N,d]}) \leq R_{n_{1},1}L_{n_1}L_{n_d} R_{n_{d},2}
$$
\label{theorem4}
\end{theorem}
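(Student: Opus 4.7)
The plan is to mimic the strategy of Theorem~\ref{theorem3}, but instead of isolating a single factor tensor, to cut the TS network along the boundary between modes $n_d$ and $n_{d+1}$ (and symmetrically between $n_N$ and $n_1$). First I would invoke Theorem~\ref{theorem1} to replace $\mathcal{X}$ by $\vec{\mathcal{X}}^{\mathbf{n}} = TSD(\{\mathcal{G}_{n_k},\mathcal{C}_{n_k}\}_{k=1}^N)$, so that the indices to be unfolded are the first $d$ modes of the permuted tensor. The key observation is that the TS network, viewed as a graph, has two edge-disjoint cycles (the outer cycle of factor-to-core $R$-edges, and the inner cycle of core-to-core $L$-edges), so a minimum cut separating $\{\mathcal{G}_{n_1},\dots,\mathcal{G}_{n_d}\}$ from $\{\mathcal{G}_{n_{d+1}},\dots,\mathcal{G}_{n_N}\}$ severs exactly four edges.

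Concretely, I would define a ``left'' tensor $\mathcal{Z}_L$ by contracting $\mathcal{G}_{n_1},\mathcal{C}_{n_1},\mathcal{G}_{n_2},\mathcal{C}_{n_2},\dots,\mathcal{C}_{n_{d-1}},\mathcal{G}_{n_d}$ along all their shared $R$-edges, and a ``right'' tensor $\mathcal{Z}_R$ by contracting $\mathcal{G}_{n_{d+1}},\mathcal{C}_{n_{d+1}},\dots,\mathcal{G}_{n_N}$ together with the two boundary cores $\mathcal{C}_{n_d}$ and $\mathcal{C}_{n_N}$ along all their internal $R$- and $L$-edges. Placing $\mathcal{C}_{n_d}$ and $\mathcal{C}_{n_N}$ on the right side, the only edges that still connect $\mathcal{Z}_L$ to $\mathcal{Z}_R$ are the leg $R_{n_d,2}$ from $\mathcal{G}_{n_d}$ to $\mathcal{C}_{n_d}$, the leg $L_{n_d}$ from $\mathcal{C}_{n_{d-1}}$ to $\mathcal{C}_{n_d}$, the leg $L_{n_1}$ from $\mathcal{C}_{n_N}$ to $\mathcal{C}_{n_1}$, and the leg $R_{n_1,1}$ from $\mathcal{C}_{n_N}$ to $\mathcal{G}_{n_1}$. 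Thus $\mathcal{Z}_L$ carries external modes $(i_{n_1},\dots,i_{n_d})$ plus the four boundary indices, and $\mathcal{Z}_R$ carries $(i_{n_{d+1}},\dots,i_{n_N})$ plus the same four boundary indices, and $\vec{\mathcal{X}}^{\mathbf{n}}$ equals the contraction of $\mathcal{Z}_L$ and $\mathcal{Z}_R$ over those four shared modes.

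Next I would apply the generalised unfolding machinery of Section~2: choose permutation vectors that place the four boundary indices adjacent to the $I$-modes so that
\[
\mathbf{X}_{[n_1:n_N,d]} = (\mathbf{Z}_L)_{[\mathbf{v}_L;\,d]}\,(\mathbf{Z}_R)_{[\mathbf{v}_R;\,4]},
\]
with the inner dimension equal to $R_{n_1,1}L_{n_1}L_{n_d}R_{n_d,2}$. Sub-multiplicativity of matrix rank then gives
\[
\text{Rank}(\mathbf{X}_{[n_1:n_N,d]}) \leq R_{n_1,1}L_{n_1}L_{n_d}R_{n_d,2},
\]
finishing the bound.

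The main obstacle I expect is purely bookkeeping: writing the exact contraction expressions for $\mathcal{Z}_L$ and $\mathcal{Z}_R$ (and hence the precise permutation vectors $\mathbf{v}_L,\mathbf{v}_R$ needed for the unfolding identity) in the non-commutative generalised-contraction notation of \cite{zheng2021fully}, where the surviving leg order depends on the order in which contractions are performed. Once the two permutation vectors are fixed so that the four boundary legs line up as the shared inner dimension of the matrix product, the rank bound is immediate; the only real content is verifying that exactly these four legs, and no others, cross the cut, which follows from the ring topology of the TS core tensors.
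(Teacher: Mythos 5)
Your proposal matches the paper's proof essentially step for step: the paper defines exactly your $\mathcal{Z}_L$ and $\mathcal{Z}_R$ (called $\mathcal{P}$, of order $d+4$, and $\mathcal{Q}$, of order $N-d+4$, with the two boundary cores $\mathcal{C}_{n_d}$ and $\mathcal{C}_{n_N}$ absorbed into the right-hand piece), identifies the same four cut legs $R_{n_1,1}, L_{n_1}, L_{n_d}, R_{n_d,2}$, and writes $\mathbf{X}_{[n_1:n_N;d]} = \mathbf{P}_{[\mathbf{p_v};d]}\mathbf{Q}_{[\mathbf{q_v};4]}$ before applying rank sub-multiplicativity. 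The bookkeeping you defer (the explicit permutation vectors) is carried out in the paper but adds nothing conceptually, so your argument is correct and follows the same route.
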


\begin{proof}
Let an order-$N$ tensor $\mathcal{X} \in \mathbb{R}^{I_1 \times I_2 \times \cdots \times I_N}$ have a Tensor Star decomposition $TSD(\{ \mathcal{G}_k, \mathcal{C}_k\}^N_{k=1})$. Assume that $\vec{\mathcal{X}}^{\mathbf{n}}$ is a permuted $\mathcal{X}$, where $\mathbf{n} = (n_1, n_2, \ldots, n_N)$ is a circular dimensional permutation of $(1,2,\ldots,N)$. Thus, $\vec{\mathcal{X}}^{\mathbf{n}}$ has a TS decomposition of $TSD(\{ \mathcal{G}_{n_k}, \mathcal{C}_{n_k}\}^N_{k=1})$.

For any $d=2,3,\ldots,N$, we first define an order-$(d+4)$ tensor $\mathcal{P}$ and an order-$(N-d+4)$ tensor $\mathcal{Q}$ as follows: 
\begin{equation}
\mathcal{P} = \ \mathcal{G}_{n_1} \times_3  \mathcal{C}_{n_1} \times_5 \mathcal{G}_{n_2} \times_{4,6}^{2,1}  \mathcal{C}_{n_2} \times_{6} \mathcal{G}_{n_3} \times_{5,7}^{2,1}  \mathcal{C}_{n_3} \times_{7} \cdots \times_{d+2} \mathcal{G}_{n_{d-1}} \times_{d+1,d+3}^{2,1} \mathcal{C}_{n_{d-1}} \times_{d+3} \mathcal{G}_{n_d}
\end{equation}
\begin{equation}
\begin{split}
\mathcal{Q} = \ & \mathcal{C}_{n_d} \times_4  \mathcal{G}_{n_{d+1}}\times_{5,3}^{1,2} \mathcal{C}_{n_{d+1}} \times_{5}  \mathcal{G}_{n_{d+2}} \times_{6,4}^{1,2} \mathcal{C}_{n_{d+2}} \times_{6} \cdots \\ 
&\times_{(N-1)-d+3} \mathcal{G}_{n_{N-1}} \times_{(N-1)-d+4, (N-1)-d+2}^{1,2} \mathcal{C}_{n_{N-1}} \times_{N-d+3} \mathcal{G}_{n_N} \times^{1,2}_{N-d+4, N-d+2} \mathcal{C}_{n_N}
\end{split}
\end{equation}

where $\mathcal{P} \in \mathbb{R}^{R_{n_1,1} \times L_{n_1} \times I_{n_1} \times I_{n_2} \times I_{n_3} \times \cdots \times I_{n_{d-1}} \times L_{n_d} \times I_{n_d} \times R_{{n_d},2}}$ and 

$\mathcal{Q} \in \mathbb{R}^{R_{n_d,2} \times L_{n_d} \times I_{n_{d+1}} \times I_{n_{d+2}} \times I_{n_{d+3}} \times \cdots \times I_{n_{N-1}} \times I_{n_N} \times L_{n_1} \times R_{n_1,1} }$.

Then, $\mathcal{Q}$ and $\mathcal{P}$ can be contracted together to obtain $\vec{\mathcal{X}}^{\mathbf{n}}$ as
\begin{equation}
    \vec{\mathcal{X}}^{\mathbf{n}} = \mathcal{P} \times_{1,2,d+2,d+4}^{N-d+4,N-d+3,2,1} \mathcal{Q}
    \label{equ:X=QP}
\end{equation}
We now define two vectors $\mathbf{p_v}$ and $\mathbf{q_v}$ for the permutation of dimensions as
$$
\mathbf{p_v} = (3,4,5,\ldots,d-1,d,d+1,d+3,1,2,d+2,d+4)
$$
$$
\mathbf{q_v} = (N-d+4,N-d+3,2,1,3,4,5,6,\ldots,N-d-1,N-d,N-d+1,N-d+2)
$$
Upon unfolding $\mathcal{X}$, $\mathcal{P}$ and $\mathcal{Q}$, Equation (\ref{equ:X=QP}) can be written in a matrix format
\begin{equation}
    \mathbf{X}_{[n_1:n_N;d]} = \mathbf{P}_{[\mathbf{p_v};d]} \mathbf{Q}_{[\mathbf{q_v};4]}
\end{equation}

The rank bound of generalised unfolded $\mathbf{X}_{[1:N;d]}$ can now be expressed as
\begin{equation}
\begin{split}
    \text{Rank}(\mathbf{X}_{[n_1:n_N;d]})  & \leq \text{min}\{\text{Rank}(\mathbf{P}_{[\mathbf{p_v};d]}), \text{Rank}(\mathbf{Q}_{[\mathbf{q_v};4]}) \} \\
    &\leq \text{min}\{R_{n_1,1} L_{n_1} L_{n_d} R_{{n_d},2}, \prod^d_{i=1} I_{n_i}, \prod^N_{j=d+1} I_{n_j}\} \\
    &\leq R_{n_1,1} L_{n_1} L_{n_d} R_{n_d,2}
\end{split}
\end{equation}
where $d=2,3,...,N$.
\end{proof}

\section{Alternating Least Squares Algorithm for Tensor Star Decomposition}
An Alternating Least Squares (ALS) Algorithm is next developed for performing Tensor Star decomposition in this section. The TSD-ALS involves updating individual TS decomposition factor tensors, $\mathcal{G}_k$, and TS decomposition core tensors, $\mathcal{C}_k$, iteratively until the approximation error between $TSD(\{ \mathcal{G}_k, \mathcal{C}_k\}^N_{k=1})$ and the original order-$N$ tensor $\mathcal{X} \in \mathbb{R}^{I_1 \times I_2 \times I_3 \times \cdots \times I_{N-1} \times I_{N}}$ becomes lower than a set threshold $\epsilon$. We only derive the update steps for $\mathcal{G}_k$ and $\mathcal{C}_k$. As the subsequent ALS steps are straightforward and are therefore omitted.

\subsection{Objective Function of TSD-ALS}
The objective function of TSD-ALS is defined as follows:
\begin{equation}
    \min_{\{ \mathcal{G}_k, \mathcal{C}_k\}^N_{k=1}} || \mathcal{X} - TSD(\{ \mathcal{G}_k, \mathcal{C}_k\}^N_{k=1}) ||_F
\end{equation}

\subsection{Update $\mathcal{G}_{k}$}
To update a factor tensor $\mathcal{G}_k$ individually, it is essential to first isolate $\mathcal{G}_k$ from all other factor tensors and core tensors. Let $\mathbf{n} = (n_1, n_2, \ldots, n_N)$ be a circular dimensional permutation of $(1,2,\ldots,N)$, Equation (\ref{equ:y_to_isolate_g}) defines an order-$(N+1)$ tensor $\mathcal{Y}_{all \setminus \mathcal{G}_{n_N}}  \in \mathbb{R}^{I_{n_1}\times I_{n_{2}}\times I_{n_3}\times \cdots \times I_{n_{N-2}}\times I_{n_{N-1}} \times R_{n_N,1} \times R_{n_N,2}}$, which collects all TS decomposition factor and core tensors, except $\mathcal{G}_{n_N}$. Define a vector $\mathbf{y_v} = (N, N+1, 1, 2,\ldots, N-2, N-1)$. For $k=1,2,\ldots,N$, let $n_N=k$, we then have
\begin{equation}
    \mathbf{X}_{<k>} = (\mathbf{G}_{k})_{(2)} (\mathbf{Y}_{all \setminus \mathcal{G}_{k}})_{[\mathbf{y_v};2]}
    \label{equ:second-orderX=GY}
\end{equation}

Therefore, for $k=1,2,\ldots, N$, the factor tensor $\mathcal{G}_k$ can be updated by solving
\begin{equation}
\argmin_{(\mathbf{G}_{k})_{(2)}} ||\mathbf{X}_{<k>} - (\mathbf{G}_{k})_{(2)} (\mathbf{Y}_{all \setminus \mathcal{G}_{k}})_{[\mathbf{y_v};2]}||_F
\end{equation}

\subsection{Update $\mathcal{C}_{k}$}
Let $\mathbf{n} = (n_1, n_2, \ldots, n_N)$ be a circular dimensional permutation of $(1,2,\ldots,N)$. To update a core tensor $\mathcal{C}_k$ individually, we first define a tensor $\mathcal{Z}_{all \setminus \mathcal{C}_{n_N}} \in \mathbb{R}^{R_{n_1,1} \times I_{n_1} \times L_{n_1} \times I_{n_2} \times I_{n_3} \times \cdots \times I_{n_k} \times \cdots \times I_{n_{N-1}} \times L_{n_N} \times I_{n_N} \times R_{n_N,2}}$ to collect all TS decomposition factor and core tensors except $\mathcal{C}_{n_N}$, in the form
\begin{equation}
    \mathcal{Z}_{all \setminus \mathcal{C}_{n_N}} = \mathcal{G}_{n_1} \times_3  \mathcal{C}_{n_1} \times_5 \mathcal{G}_{n_2} \times_{4,6}^{2,1} \mathcal{C}_{n_2} \times_6 \mathcal{G}_{n_3} \times_{5,7}^{2,1} \mathcal{C}_{n_3} \times_7 \cdots \times_{k+3} \mathcal{G}_{n_k} \times_{k+2,k+4}^{2,1} \mathcal{C}_{n_k} \cdots \times_{N+3} \mathcal{G}_{n_N}
    \label{equ:z_def}
\end{equation}
In this way, we arrive at
\begin{equation}
   \vec{\mathcal{X}}^{\mathbf{n}} =  \mathcal{C}_{n_N} \times^{N+4,N+2,3,1}_{1,2,3,4} \mathcal{Z}_{all \setminus \mathcal{C}_{n_N}} 
   \label{equ:x=cz}
\end{equation}

Define the dimension permutation vector $\mathbf{z_v}=(N+4,N+2,3,1,2,4,5,6,\ldots, N-1, N, N+1, N+3)$. Then, Equation (\ref{equ:x=cz}) can be written in vector and matrix format as
\begin{equation}
    \mathbf{x}_{[n_1:n_N; 0]} = (\mathbf{c}_{n_N})_{[1:4; 0]} (\mathbf{Z}_{all \setminus \mathcal{C}_{n_N}})_{[\mathbf{z_v};4]}
\end{equation}

For $k=1,2,\ldots,N$, let $n_N=k$, to have
\begin{equation}
    \mathbf{x}_{[n_1:n_N; 0]} = (\mathbf{c}_{k})_{[1:4; 0]} (\mathbf{Z}_{all \setminus \mathcal{C}_{k}})_{[\mathbf{z_v};4]}
\label{equ:second-orderX=CZ}
\end{equation}

Therefore, for $k=1,2,\ldots, N$, the core tensor $\mathcal{C}_k$ can be updated by solving
\begin{equation}
    \argmin_{(\mathbf{c}_{k})_{[1:4; 0]}} || \mathbf{x}_{[n_1:n_N; 0]} - (\mathbf{c}_{k})_{[1:4; 0]} (\mathbf{Z}_{all \setminus \mathcal{C}_{k}})_{[\mathbf{z_v};4]}||_F
\end{equation}

\section{Proximal Alternating Minimization for Tensor Completion}
Tensor decompositions can be used to promote sparsity in the data and find efficient representations of the data. In this way, by assuming a low-rank structure in some data with missing or untrustworthy entries, we can update the tensor decomposed components iteratively to approximate the observed "good" entries while filling up the missing "bad" entries, the so-called tensor completion. To this end, we shall now introduce a Proximal Alternating Minimization (PAM) algorithm for the proposed Tensor Star decomposition similar to \cite{zheng2021fully} for FCTN and \cite{twd} for TW decomposition. The TSD-PAM algorithm is shown in Algorithm \ref{alg:TSD-PAM}. Note that the TSD-PAM algorithm can also be used to obtain the TS decomposition of a tensor if there are no missing entries.

As only the observed entries are approximated, let $\delta_{\Omega}()$ denote the indicator function of whether a location belongs to the location set $\Omega$ of observed entries in a tensor $\mathcal{H} \in \mathbb{R}^{I_1 \times I_2 \times \cdots \times I_N}$. Let $\mathcal{X} \in \mathbb{R}^{I_1 \times I_2 \times \cdots \times I_N} $ be an approximation of $\mathcal{H}$. According to \cite{bolte2014proximal} and \cite{attouch2013convergence}, the unconstrained problem can be formulated as
\begin{equation}
    \min_{\mathcal{X}, \{ \mathcal{G}_i, \mathcal{C}_i\}_{i=1}^{N}} \psi(\mathcal{X}, \{ \mathcal{G}_i, \mathcal{C}_i\}_{i=1}^{N}) = \min_{\mathcal{X}, \{ \mathcal{G}_i, \mathcal{C}_i\}_{i=1}^{N}}\frac{1}{2}||\mathcal{X}-TSD(\{ \mathcal{G}_k, \mathcal{C}_k\}^N_{k=1}) ||^2_F + \delta_{\Omega}(\mathcal{X})
    \label{equ:objective_f}
\end{equation}
where\begin{align}
\delta_{\Omega}(x_{i_1 i_2 \cdots i_N})=
\left\{
\begin{aligned}
    0, \ \ &\{i_1 i_2 i_3 \cdots i_N\} \in \Omega \\
    \mathcal{1}, \ \ &\{i_1 i_2 i_3 \cdots i_N\} \notin \Omega
\end{aligned}
\right.
\end{align}

In this way, we can derive the updating steps by solving
\begin{align}
\left\{
\begin{aligned}
    \mathcal{G}^{(t+1)}_{k} & = \argmin_{\mathcal{G}_{k}}\{\psi(\mathcal{X}^{(t)}, \{\mathcal{G}^{(t+1)}_i, \mathcal{C}^{(t+1)}_i\}_{i=1}^{k-1}, \mathcal{G}_k, \mathcal{C}^{(t)}_k , \{\mathcal{G}^{(t)}_i, \mathcal{C}^{(t)}_i\}_{i=k+1}^{N}) +\frac{\rho}{2}||\mathcal{G}_{k} - \mathcal{G}^{(t)}_{k} ||^2_F \} \\
    \mathcal{C}^{(t+1)}_{k} & = \argmin_{\mathcal{C}_{k}}\{\psi(\mathcal{X}^{(t)}, \{\mathcal{G}^{(t+1)}_i, \mathcal{C}^{(t+1)}_i\}_{i=1}^{k-1}, \mathcal{G}^{(t+1)}_k, \mathcal{C}_k , \{\mathcal{G}^{(t)}_i, \mathcal{C}^{(t)}_i\}_{i=k+1}^{N})+\frac{\rho}{2}||\mathcal{C}_{k} - \mathcal{C}^{(t)}_{k} ||^2_F \} \\
    \mathcal{X}^{(t+1)}_{k} & = \argmin_{\mathcal{X}_{k}}\{\psi(\mathcal{X}, \{ \mathcal{G}^{(t+1)}_i, \mathcal{C}^{(t+1)}_i\}_{i=1}^{N}) +\frac{\rho}{2}||\mathcal{X} - \mathcal{X}^{(t)} ||^2_F \}
\end{aligned}
\right.
\label{equ:pam_problems}
\end{align}
where $\rho$ is a positive proximal parameter, and $t \in \mathbb{Z}^{+}$ denotes the iteration.

\begin{algorithm}
    \caption{Proximal Alternating Minimization for Tensor Completion using Tensor Star Decomposition}
    \label{alg:TSD-PAM}
    \begin{algorithmic}[1] 
        \State \textbf{Input:} Partially observed $\mathcal{H}$ within the set $\Omega$, factor ranks $\{R_{i,s}\}_{i=1, s=1}^{N,2}$, ring ranks $\{L\}_{i=1}^{N}$, initialised $\mathcal{X}^{(1)} = \mathcal{H}$ within the set $\Omega$ and $0$ otherwise, initialised factor tensors $\{\mathcal{G}^{(1)}_{k}\}_{k=1}^{N}$ and core tensors $\{\mathcal{C}^{(1)}_{k}\}_{k=1}^{N}$ from $\mathcal{N}(0,1)$, $\rho=0.01$, maximum iteration $itr=1000$, and convergence threshold $tol=0.00001$.
        \For{$t=1,2,\ldots,itr$}
        \For{$k=1,2,\ldots,N$}
            \State Update $\mathcal{G}_{k}$ according to Equation (\ref{equ:g_update}).
            \State Update $\mathcal{C}_{k}$ according to Equation (\ref{equ:c_update}).
        \EndFor
        \State Update $\mathcal{X}$ according to Equation (\ref{equ:x_update}).
        \If $|| \mathcal{X}^{(t)}-\mathcal{X}^{(t-1)}||_F / ||\mathcal{X}^{(t)} < tol$
        \State break
        \EndIf
        \EndFor

    \State \textbf{Output:} $\{\mathcal{G}_{k}\}_{k=1}^{N}$, $\{\mathcal{C}_{k}\}_{k=1}^{N}$, and $\mathcal{X}$ which represents the recovered $\mathcal{H}$.
    \end{algorithmic}
\end{algorithm}

\subsection{Update of the factor tensor $\mathcal{G}_{k}$} \label{sec:update_g}
Using Equation (\ref{equ:second-orderX=GY}), the $\{\mathcal{G}_{i}\}_{i=1}^{N}$ subproblems of Equation (\ref{equ:pam_problems}) can be written in a order-$2$ format as
\begin{equation}
    (\mathbf{G}_{k}^{(t+1)})_{(2)} = \argmin_{(\mathbf{G}_{k})_{(2)}}  \{\frac{1}{2}||\mathbf{X}_{<k>} - (\mathbf{G}_{k})_{(2)} (\mathbf{Y}_{all \setminus \mathcal{G}_{k}})_{[\mathbf{y_v};2]} ||^2_F + \frac{\rho}{2}||(\mathbf{G}_{k})_{(2)} - (\mathbf{G}_{k}^{(t)})_{(2)} ||^2_F\}
    \label{equ:g_sub}
\end{equation}
where is $\mathcal{Y}_{all \setminus \mathcal{G}_{k}}$ is defined in Equation (\ref{equ:y_to_isolate_g}) and is a function of $(\{\mathcal{G}^{(t+1)}_i, \mathcal{C}^{(t+1)}_i\}_{i=1}^{k-1}, \mathcal{C}^{(t)}_k , \{\mathcal{G}^{(t)}_i, \mathcal{C}^{(t)}_i\}_{i=k+1}^{N})$.

Upon setting the first derivative of Equation (\ref{equ:g_sub}) w.r.t. $(\mathbf{G}_{k})_{(2)}$ to $0$, we have:
\begin{equation}
        -(\mathbf{X}_{<k>}-(\mathbf{G}_{k})_{(2)} (\mathbf{Y}_{all \setminus \mathcal{G}_{k}})_{[\mathbf{y_v};2]}) ((\mathbf{Y}_{all \setminus \mathcal{G}_{k}})_{[\mathbf{y_v};2]})^{\mathsf{T}} + \rho ((\mathbf{G}_{k})_{(2)} - (\mathbf{G}_{k}^{(t)})_{(2)}) = 0
\end{equation}
\begin{equation}
       (\mathbf{G}_{k})_{(2)}(\rho \mathbf{I} + (\mathbf{Y}_{all \setminus \mathcal{G}_{k}})_{[\mathbf{y_v};2]}((\mathbf{Y}_{all \setminus \mathcal{G}_{k}})_{[\mathbf{y_v};2]})^{\mathsf{T}} )=\rho (\mathbf{G}_{k}^{(t)})_{(2)} + \mathbf{X}_{<k>}((\mathbf{Y}_{all \setminus \mathcal{G}_{k}})_{[\mathbf{y_v};2]})^{\mathsf{T}} 
\end{equation}
Then,
\begin{equation}
       (\mathbf{G}^{(t+1)}_{k})_{(2)}=(\rho (\mathbf{G}_{k})_{(2)} + \mathbf{X}_{<k>}((\mathbf{Y}_{all \setminus \mathcal{G}_{k}})_{[\mathbf{y_v};2]})^{\mathsf{T}}) (\rho \mathbf{I} + (\mathbf{Y}_{all \setminus \mathcal{G}_{k}})_{[\mathbf{y_v};2]}((\mathbf{Y}_{all \setminus \mathcal{G}_{k}})_{[\mathbf{y_v};2]})^{\mathsf{T}} )^{-1}
       \label{equ:g_update}
\end{equation}

From here, it is straightforward to see that folding $ (\mathbf{G}_{k}^{(t+1)})_{(2)}$ gives $\mathcal{G}^{(t+1)}_{k} $.

\subsection{Update of the core tensor $\mathcal{C}_{k}$}  \label{sec:update_c}
Using Equation (\ref{equ:second-orderX=CZ}), the $\{\mathcal{C}_{i}\}_{i=1}^{N}$ subproblems of Equation (\ref{equ:pam_problems}) can be written as

\begin{equation}
    (\mathbf{c}^{(t+1)}_{k})_{[1:4; 0]} = \argmin_{(\mathbf{c}_{k})_{[1:4; 0]}}  \{\frac{1}{2}||\mathbf{x}_{[n_1:n_N; 0]}^{(t)} - (\mathbf{c}_{k})_{[1:4; 0]} (\mathbf{Z}_{all \setminus \mathcal{C}_{k}})_{[\mathbf{z_v};4]} ||^2_F + \frac{\rho}{2}||(\mathbf{c}_{k})_{[1:4; 0]} - (\mathbf{c}^{(t+1)}_{k})_{[1:4; 0]} ||^2_F\}
    \label{equ:c_sub}
\end{equation}
where is $\mathcal{Z}_{all \setminus \mathcal{C}_{k}}$ is defined in Equation (\ref{equ:z_def}) and is a function of $(\{\mathcal{G}^{(t+1)}_i, \mathcal{C}^{(t+1)}_i\}_{i=1}^{k-1}, \mathcal{G}^{(t+1)}_k, \{\mathcal{G}^{(t)}_i, \mathcal{C}^{(t)}_i\}_{i=k+1}^{N})$.

Upon setting the first derivative of Equation (\ref{equ:c_sub}) w.r.t. $(\mathbf{c}_{k})_{[1:4; 0]}$ to $0$, we have:
\begin{equation}
-(\mathbf{x}_{[n_1:n_N; 0]}^{(t)}-(\mathbf{c}^{(t+1)}_{k})_{[1:4; 0]} (\mathbf{Z}_{all \setminus \mathcal{C}_{k}})_{[\mathbf{z_v};4]})((\mathbf{Z}_{all \setminus \mathcal{C}_{k}})_{[\mathbf{z_v};4]})^{\mathsf{T}} + \rho((\mathbf{c}^{(t+1)}_{k})_{[1:4; 0]}-(\mathbf{c}^{(t)}_{k})_{[1:4; 0]}) = 0
\end{equation}
\begin{equation}
(\mathbf{c}^{(t+1)}_{k})_{[1:4; 0]}(\rho \mathbf{I} + (\mathbf{Z}_{all \setminus \mathcal{C}_{k}})_{[\mathbf{z_v};4]}((\mathbf{Z}_{all \setminus \mathcal{C}_{k}})_{[\mathbf{z_v};4]})^{\mathsf{T}}) =    \rho(\mathbf{c}^{(t)}_{k})_{[1:4; 0]}+\mathbf{x}_{[n_1:n_N; 0]}^{(t)}((\mathbf{Z}_{all \setminus \mathcal{C}_{k}})_{[\mathbf{z_v};4]})^{\mathsf{T}}
\end{equation}
Then,
\begin{equation}
    (\mathbf{c}^{(t+1)}_{k})_{[1:4; 0]} = (\rho(\mathbf{c}^{(t)}_{k})_{[1:4; 0]}+\mathbf{x}_{[n_1:n_N; 0]}^{(t)}((\mathbf{Z}_{all \setminus \mathcal{C}_{k}})_{[\mathbf{z_v};4]})^{\mathsf{T}}) ((\rho \mathbf{I} + (\mathbf{Z}_{all \setminus \mathcal{C}_{k}})_{[\mathbf{z_v};4]}((\mathbf{Z}_{all \setminus \mathcal{C}_{k}})_{[\mathbf{z_v};4]})^{\mathsf{T}}))^{-1}
    \label{equ:c_update}
\end{equation}

From here, it is straightforward to see that folding $ (\mathbf{c}^{(t+1)}_{k})_{[1:4; 0]}$ gives $\mathcal{C}_{k}^{(t+1)}$.

\subsection{Update of $\mathcal{X}$}  \label{sec:update_x}
The $\mathcal{X}$-subproblem of Equation (\ref{equ:pam_problems}) can be expressed as
\begin{equation}
    \mathcal{X}^{(t+1)}=\argmin_{\mathcal{X}}\{ \frac{1}{2}||\mathcal{X}-TSD(\{ \mathcal{G}^{(t+1)}_k, \mathcal{C}^{(t+1)}_k\}^N_{k=1}) ||^2_F + \delta_{\Omega}(\mathcal{X})+\frac{\rho}{2}||\mathcal{X} - \mathcal{X}^{(t)} ||^2_F\}
    \label{equ:x_subproblem}
\end{equation}

The indicator function $\delta_{\Omega}(\mathcal{X})$ in Equation (\ref{equ:x_subproblem}) essentially constrains the reduction of the approximation error to only within the location set $\Omega$, while forcing the entries of $\mathcal{X}^{(t+1)}$ in location set $\Omega$ to be equal to the entries of $\mathcal{H}$ in location set $\Omega$. Thus, we can rewrite the $\mathcal{X}$-subproblem as
\begin{align}
\mathcal{X}^{(t+1)}(i_1, i_2, \ldots, i_N) =
\left\{
\begin{aligned}
&\mathcal{H}(i_1, i_2, \ldots, i_N), i_1 i_2 \cdots i_N  \in \Omega \\
&\argmin_{x}\{ \frac{1}{2}||\mathcal{X}-TSD(\{ \mathcal{G}^{(t+1)}_k, \mathcal{C}^{(t+1)}_k\}^N_{k=1}) ||^2_F + \frac{\rho}{2}||\mathcal{X} - \mathcal{X}^{(t)} ||^2_F\},  i_1 i_2 \cdots i_N  \notin \Omega
\end{aligned}
\right.
\label{equ:x_sub}
\end{align}

To solve the second part of Equation (\ref{equ:x_sub}), we take the first derivative of the second part of Equation (\ref{equ:x_sub}) w.r.t. $\mathcal{X}$ and set it to $0$, to give
\begin{equation}
\mathcal{X}-TSD(\{ \mathcal{G}^{(t+1)}_k, \mathcal{C}^{(t+1)}_k\}^N_{k=1}) + \rho \mathcal{X} - \rho \mathcal{X}^{(t)} = 0
\end{equation}
\begin{equation}
(1+\rho)\mathcal{X} = TSD(\{ \mathcal{G}^{(t+1)}_k, \mathcal{C}^{(t+1)}_k\}^N_{k=1}) + \rho \mathcal{X}^{(t)}
\end{equation}
Therefore, 
\begin{equation}
x^{(t+1)}_{i_1 i_2 \cdots i_N} = \frac{TSD(\{ \mathcal{G}^{(t+1)}_k, \mathcal{C}^{(t+1)}_k\}^N_{k=1}) + \rho \mathcal{X}^{(t)}}{1+\rho}, i_1 i_2 \cdots i_N  \notin \Omega
\end{equation}

Finally, we arrive at the full update step of $\mathcal{X}^{(t)}$ in the form
\begin{align}
x^{(t+1)}_{i_1 i_2 \cdots i_N} =
\left\{
\begin{aligned}
&\mathcal{H}(i_1, i_2, \cdots, i_N), i_1 i_2 \cdots i_N  \in \Omega \\
&\frac{TSD(\{ \mathcal{G}^{(t+1)}_k, \mathcal{C}^{(t+1)}_k\}^N_{k=1}) + \rho \mathcal{X}^{(t)}}{1+\rho}, i_1 i_2 \cdots i_N  \notin \Omega
\end{aligned}
\right.
\label{equ:x_update}
\end{align}

\section{Proof-of-concept Example of TSD-PAM}
We next demonstrate a proof-of-concept example of TSD-PAM to verify the performance and convergence of the algorithm in practice by employing TSD-PAM to popular datasets of tensor completion. More specifically, we utilise the same test MSI data, $Toy$, of size $200\times200\times31$ \footnote{https://www.cs.columbia.edu/CAVE/databases/multispectral/} as in \cite{twd}. With $90 \%$ missing entries and an observed MPSNR of 11.389, TSD-PAM with small ranks $L_1=3$, $R_{3,1} = 4$, $R_{3,2}=L_2=6$, $R_{1,1}= L_3= 8$, $R_{1,2} = R_{2,1} = R_{2,2} =9$, achieves an MPSNR of $36.5466$, as shown in Figure \ref{fig:toy_result}. 

\begin{figure}[h] 
    \centering
    \includegraphics[width = 0.9\hsize]{./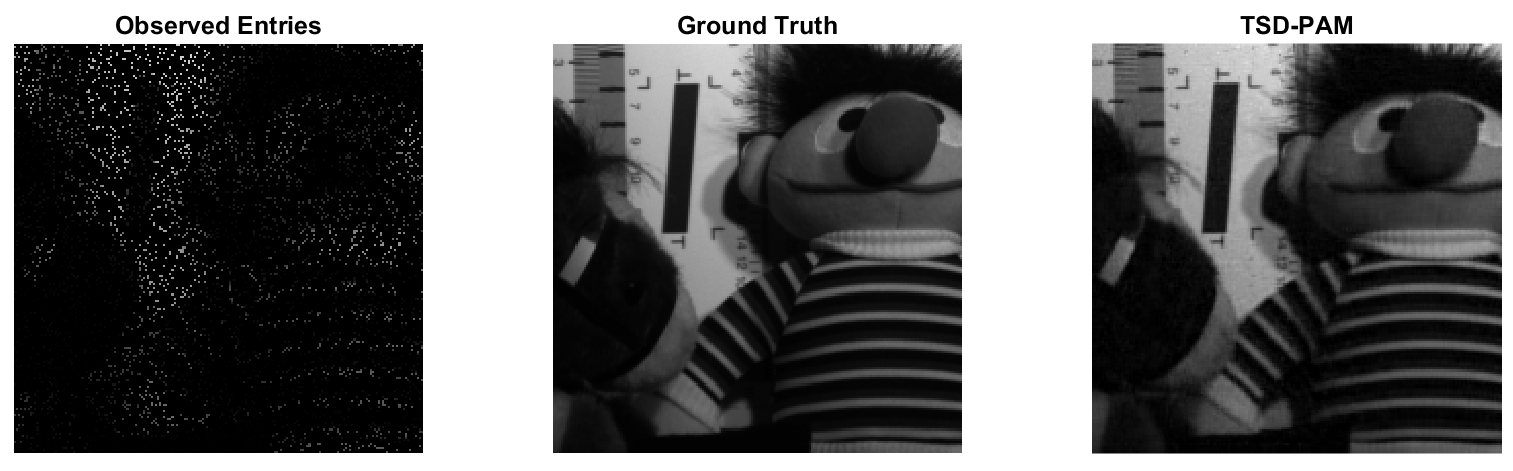}
    \caption{With $90\%$ entries missing, TSD-PAM is able to recover the original tensor with small ranks (The largest rank is $9$). The 20-th spectral sample is shown.}
    \label{fig:toy_result}
\end{figure}

\pagebreak
\bibliographystyle{ieeetr}  
\bibliography{references}  
\end{document}